%% file: main.tex
\documentclass[10pt]{article}

\usepackage[preprint]{tmlr}
\let\AND\relax

\usepackage[utf8]{inputenc}
\title{Convex Optimization with Local Label Differential Privacy: \\Tight Bounds in All Privacy Regimes}

\input{glodef.tex}

\input{locdef.tex}

\newcommand{\bone}{\mathbf{1}}

\usepackage{graphicx}
\usepackage{tikz}
\usepackage{bbm}
\usetikzlibrary{positioning, shapes, arrows}

\renewcommand{\ignore}[1]{}
\newcommand{\idc}[1]{\mathbbm{1}\left\{#1\right\}}
\newcommand{\nl}{K} 
\usepackage[capitalise]{cleveref}
\crefname{claim}{Claim}{Claim}

\author{\name Lynn Chua \email chualynn@google.com\\ \addr Google Research \\
\AND
\name  Badih Ghazi \email badihghazi@gmail.com\\ \addr Google Research \\
\AND \name  Ravi Kumar \email ravi.k53@gmail.com\\ \addr Google Research \\ \AND \name  Pasin Manurangsi \email pasin@google.com\\ \addr Google Research \\ \AND
\AND \name  Ziteng Sun \email zitengsun@google.com\\ \addr Google Research \\ \AND 
\name  Chiyuan Zhang \email chiyuan@google.com\\ \addr Google Research
}

\def\month{05}  
\def\year{2026} 
\def\openreview{\url{https://openreview.net/forum?id=8Sjm0FrV2u}} 

\begin{document}

\maketitle

\begin{abstract}
We study the problem of Stochastic Convex Optimization (SCO) under the constraint of local Label Differential Privacy (L-LDP). In this setting, the features are considered public, but the corresponding labels are sensitive and must be randomized by each user locally before being sent to an untrusted analyzer. Prior work for SCO under L-LDP \citep{ghazi2021deeplearninglabeldifferential} established an excess population risk bound with a \emph{linear} dependence on the size of the label space, $\nl$: $O\Paren{\frac{\nl}{\eps\sqrt{\ns}}}$ in the high-privacy regime ($\eps \leq 1$) and $O\Paren{\frac{\nl}{e^{\eps} \sqrt{\ns}}}$ in the medium-privacy regime ($1 \leq \eps \leq \ln K$). This left open whether this linear cost is fundamental to the L-LDP model. In this note, we resolve this question. First, we present a novel and efficient non-interactive L-LDP algorithm that achieves an excess risk of $O\Paren{\sqrt{\frac{\nl}{\eps\ns}}}$ in the high-privacy regime ($\eps \leq 1$) and $O\Paren{\sqrt{\frac{\nl}{e^{\eps} \ns}}}$ in the medium-privacy regime ($1 \leq \eps \leq \ln K$). This quadratically improves the dependency on the label space size from $O(\nl)$ to $O(\sqrt{\nl})$. Second, we prove a matching information-theoretic lower bound across all privacy regimes for any sufficiently large $n$.
\end{abstract}

\section{Introduction}

The proliferation of large-scale data collection and analysis has created a fundamental tension between extracting statistical insights for machine learning and safeguarding the privacy of individuals contributing the data. Differential Privacy (DP) \citep{dwork2006calibrating,dwork2014algorithmic} has emerged as a gold standard for private data analysis, as it provides a mathematically rigorous and provable guarantees: a randomized algorithm satisfies $\eps$-DP if its output distribution is nearly invariant to the addition or removal of any single individual's data point. This provably bounds the privacy loss for any individual, regardless of the adversary's computational power or auxiliary knowledge.

While offering desired protection, the stringent guarantees of standard DP can impose a significant cost on statistical utility, often leading to a challenging privacy-utility tradeoff. In many practical machine learning applications, this guarantee may be unnecessarily strong---it is common for the features $X_i$ of a dataset to be public or non-sensitive, while the corresponding labels $Y_i$ contain the sensitive, private information. For instance, in medical machine learning, a patient's demographic information (features) might be semi-public, but their diagnosis or medical conditions (the label) is highly confidential \citep{TangNMSSHM22}. This common scenario motivates the study of \emph{Label DP}~\citep{chaudhuri2011differentially}, a well-established relaxation where privacy is guaranteed only with respect to changes in the labels of the training examples\footnote{It should be stressed that Label DP does \emph{not} provide any privacy protection for the features. As such, Label DP only limits an adversary's ability to infer the label beyond what is possible for an adversary with access to the features. For example, if each label is completely determined by the features, then the adversary will be able to exactly identify the label.}:

\begin{definition}[Label DP] An algorithm $\cA$ is said to be \emph{$\eps$-Label DP} if for all input datasets $D = \{(X_i, Y_i)\}_{i \in [n]}$ and $D' = \{(X'_i, Y'_i)\}_{i \in [n]}$ that differ by at most one label of a single data point and for all possible output $O$, we have $\probof{\cA(D) = O} \le e^\eps \cdot \probof{\cA(D') = O}$.
\end{definition}

In this work, we study Label DP in the context of Stochastic Convex Optimization (SCO). Given $\ns$ i.i.d. samples $\{(X_i, Y_i)\}_{i=1}^\ns$ from an unknown distribution $P$ over $\cZ := \cX \times [\nl]$, the goal is to find a parameter $\hat{w}$ that minimizes the \emph{population loss} $\cL(w, P) := \EE_{(X, Y) \sim P} [\ell(w, (X, Y))]$~\citep{sridharan2008convex}. Our objective is to bound the excess population risk, $\EE[\cL(\hat{w}, P)] - \cL(w^*, P)$, where $w^* = \arg\min_{w \in \cW} \cL(w, P)$. We assume throughout that the loss function $\ell$ is convex and $L$-Lipschitz (w.r.t. $w$), and the parameter $w$ comes from parameter space $\cW$ with diameter at most $R$.

We consider SCO under the stringent \emph{local} model~\citep{KasiviswanathanLNRS11}.  Unlike the central DP model, which assumes a trusted data curator, local DP provides stronger privacy by requiring each user to randomize their own data locally before transmitting it to an untrusted central analyzer. This combination defines our problem: SCO under \emph{local Label DP} (L-LDP), where each user $i$ possesses a public feature $X_i$ and a private label $Y_i$, and sends a sanitized version of their label $S_i = \cR_i(Y_i)$ to the analyzer, where $\cR_i$ is an $\eps$-L-LDP randomizer (see \cref{def:ldp_randomizer})

SCO with Label DP has been studied in several recent papers~\citep{ghazi2021deeplearninglabeldifferential,GhaziHK0MZ24,central-LDP-sco}. For the local DP model, \citet{ghazi2021deeplearninglabeldifferential} provide an algorithm based on Randomized Response that satisfies $\eps$-L-LDP and achieves an excess risk bound\footnote{Strictly speaking, \citet{ghazi2021deeplearninglabeldifferential} only state their bounds for $\eps \leq O(1)$ case. However, it is easy to generalize this to the other two cases via calculations similar to \Cref{app:subset-selection} with $d = 1$.} of $O\Paren{\max\Curly{\frac{\nl}{e^\eps - 1}, 1} \cdot \frac{RL}{\sqrt{\ns}}}$. More concretely, the excess risk bounds in the three regimes of parameters are:
\begin{itemize}
\item $O\Paren{\frac{\nl}{\eps} \cdot \frac{RL}{\sqrt{\ns}}}$ for the \emph{high}-privacy regime where $\eps \leq 1$,
\item $O\Paren{\frac{\nl}{e^\eps} \cdot \frac{RL}{\sqrt{\ns}}}$ for the \emph{medium}-privacy regime where $1 \leq \eps \leq \ln K$,
\item $O\Paren{\frac{RL}{\sqrt{\ns}}}$ for the \emph{low}-privacy regime where $\eps \geq \ln K$.
\end{itemize}
Note that in low-privacy regime, the excess risk is the same as for non-private algorithms, i.e., there is no (asymptotic) cost of Label DP. In the other two regimes, however, the excess risk depends \emph{linearly} on the number of labels $\nl$, which can present a significant barrier to scalability for problems with large label spaces. 
Note that \citet{ghazi2021deeplearninglabeldifferential} provide another algorithm based on Gaussian noise but with $\sqrt{K}$ dependency; however, it only satisfies\footnote{During the preparation of this manuscript, we observe that, instead of using Gaussian noise, we may use the vector summation mechanism from \citep{DuchiJW13} to achieve $O\Paren{\frac{\sqrt{K}}{\eps} \cdot \frac{RL}{\sqrt{\ns}}}$ excess risk with $\eps$-L-LDP in the low-privacy regime. However, the risk remains $O\Paren{\sqrt{K} \cdot \frac{RL}{\sqrt{\ns}}}$ in the medium-privacy regime, unlike our algorithm in \Cref{thm:upper_bound}. See \Cref{app:djw} for a more detailed discussion. 
} $(\eps, \delta)$-L-LDP (aka \emph{approximate}-DP) rather than $\eps$-L-LDP (\emph{pure}-DP) discussed so far. Still, this bound is notably worse than what can be achieved in the trusted \emph{central} Label DP model, where the dependency on $K$ is only polylogarithmic \citep{central-LDP-sco}. This leaves a critical open question: Is this polynomial dependency on the label space size an unavoidable, fundamental cost of the local privacy model, or is it an artifact of existing techniques?

\subsection{Our Contributions}

In this note, we resolve this question by providing new, tight bounds for SCO under L-LDP.

Our first contribution is an efficient L-LDP algorithm (\cref{alg:max_element}), based on a subset selection randomization mechanism~\citep{ye2017optimalschemesdiscretedistribution,wang2016mutualinformationoptimallylocal}, which achieves an improved excess risk bound:
\begin{theorem} \label{thm:upper_bound}
There exists a non-interactive local $\eps$-label DP algorithm $\hat{w}$ satisfying the following
\begin{align*}
    \EE_{\hw \gets \cA(P^{\otimes n})}\left[\cL(\hat{w}, P) \right] -   \cL(w^*, P) &= O\Paren{\sqrt{\max\Curly{\frac{\nl e^\eps}{(e^\eps - 1)^2}, 1}} \cdot \frac{RL}{\sqrt{\ns}}} \\ & = 
    \frac{RL}{\sqrt{\ns}} \cdot
    \begin{cases}
    O\Paren{\frac{\sqrt{\nl}}{\eps}} & \text{ if } \eps \leq 1,\\
    O\Paren{\sqrt{\frac{\nl}{e^\eps}}} & \text{ if } 1 \le \eps \le \ln \nl,\\
    O\Paren{1} & \text{ if } \eps \geq \ln \nl.
    \end{cases}
\end{align*}
\end{theorem}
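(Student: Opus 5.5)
We will prove \Cref{thm:upper_bound} by running projected stochastic gradient descent on unbiased gradient estimates built from subset-selection reports. The conclusion then follows from the standard SGD guarantee: if $\mathbb{E}[\hat g_i \mid w_i]\in\partial\cL(w_i,P)$ and $\mathbb{E}\|\hat g_i\|^2\le G^2$, then one pass over the $n$ samples with step size $R/(G\sqrt n)$, followed by averaging the iterates, gives excess risk $O(RG/\sqrt n)$. The whole task is therefore to build, from each user's single report, an unbiased gradient estimate whose second moment satisfies
\[
G^2 = O\!\left(L^2\max\Curly{\frac{\nl e^\eps}{(e^\eps-1)^2},1}\right).
\]
The algorithm is non-interactive: every user randomizes once, up front, independently of $w$.

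\textbf{Randomizer.} Fix $d\approx \nl/(e^\eps+1)$, with $d=1$ in the low-privacy regime. User $i$ outputs a uniformly random $d$-subset $S_i\subseteq[\nl]$, tilted so that subsets containing $Y_i$ receive weight $e^\eps$ and all others weight $1$. This is $\eps$-L-LDP because any two label values change the probability of a fixed subset by at most a factor of $e^\eps$. Write
\[
p=\Pr[y\in S_i\mid Y_i=y],\qquad q=\Pr[y'\in S_i\mid Y_i=y],\quad y'\neq y .
\]
These are constants with $p-q>0$ and $dq\le \dots$ computable in closed form. For any vector family $v_1,\dots,v_{\nl}$, the estimator
\[
\hat v=\frac{1}{p-q}\Big(\sum_{y\in S_i} v_y - q\sum_{y\in[\nl]} v_y\Big)
\]
is unbiased for $v_{Y_i}$. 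At step $i$ we apply this with $v_y=\nabla \ell(w_i,(X_i,y))$; the analyzer can form these vectors because $X_i$ is public. Since $(X_i,Y_i)$ is fresh relative to $w_i$, $\hat g_i$ is an unbiased estimate of the population subgradient.

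\textbf{Variance (the main obstacle).} Each $\|v_y\|\le L$, but summing norms naively over the $d$ elements of $S_i$ gives a bound linear in $\nl$. That is exactly the source of the prior $\nl/(e^\eps-1)$ rate. We will instead control the second moment directly. Center by subtracting $q\sum_y v_y + (p-q)v_{Y_i}$ to get
\[
\mathbb{E}\|\hat v - v_{Y_i}\|^2=\frac{1}{(p-q)^2}\,\mathbb{E}\Big\|\sum_y (\bone[y\in S]-\mathbb{E}\bone[y\in S])\,v_y\Big\|^2 .
\]
Expanding gives a sum of $\mathrm{Var}(\bone[y\in S])\|v_y\|^2$ terms plus covariance cross terms. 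Under uniform subset sampling the covariances between distinct indicators are nonpositive, so the cross terms can be dropped (or bounded by $O$ of the diagonal) up to a constant. What remains is at most $L^2\sum_y \Pr[y\in S]/(p-q)^2\approx L^2 d/(p-q)^2$. We then substitute the explicit $p,q$ for the chosen $d$ and verify $d/(p-q)^2=O(\nl e^\eps/(e^\eps-1)^2)$ in each regime:
\begin{itemize}
\item for $\eps\le1$, $d\approx \nl/2$ and $p-q\approx \eps/4$;
\item for $1\le\eps\le\ln\nl$, $d\approx \nl e^{-\eps}$ and $p-q=\Theta(1)$;
\item for $\eps\ge \ln \nl$, randomized response with $d=1$ gives $O(1)$.
\end{itemize}
This is the calculation referred to in \Cref{app:subset-selection}.

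The step that needs the most care is the sign of the covariance between indicators when the $Y_i$-tilt is present. If the negative-correlation argument is awkward there, we will fall back on decomposing $S$ by conditioning on whether $Y_i\in S$. Conditioned on that event, the remaining elements form a uniform subset of the rest, and its indicators are negatively correlated. With $\mathbb{E}\|\hat g\|^2\le 2L^2+2\,\mathbb{E}\|\hat v-v_{Y_i}\|^2$, the SGD bound then yields the theorem.
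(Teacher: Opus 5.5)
Your overall route is sound. It is essentially the $d$-subset-selection variant that the paper only sketches in its appendix, and your regime computations for $d\approx \nl/(e^\eps+1)$ are correct (there $p=\frac12$ and $p-q\approx\frac{e^\eps-1}{2(e^\eps+1)}$).

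\textbf{The gap} is in the step you yourself call the main obstacle. Nonpositive covariances $\mathrm{Cov}(\bone[y\in S],\bone[y'\in S])\le 0$ do \emph{not} let you drop the cross terms $\sum_{y\neq y'}\mathrm{Cov}(\bone[y\in S],\bone[y'\in S])\,\langle v_y,v_{y'}\rangle$. The per-label gradients can have negative inner products. With $v_1=u$, $v_2=-u$ the cross term is strictly positive. For the linear loss used in the paper's lower bound, $\langle v_y,v_{y'}\rangle=-\frac{1}{4\nl}$ for every $y\neq y'$. What you actually need is a bound on the top eigenvalue of the indicators' covariance matrix. Your alternative, ``bounded by $O$ of the diagonal,'' is asserted without any argument. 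The fallback of conditioning on $\{Y_i\in S\}$ again invokes only negative correlation, so by itself it does not close the gap either.

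\textbf{The missing ingredient} is exchangeability, not just the sign. First split off the $Y_i$ term via $\|a+b\|^2\le 2\|a\|^2+2\|b\|^2$; this also makes the $Y_i$-tilted covariances irrelevant. For distinct $k,k'\neq Y_i$, symmetry gives a \emph{common} covariance $\theta=\Pr[\{k,k'\}\subseteq S]-q^2\in[-q^2,0]$. Summing over ordered pairs,
\[
\sum_{k\neq k'}\theta\,\langle v_k,v_{k'}\rangle=\theta\Big\|\sum_{k\neq Y_i}v_k\Big\|^2-\theta\sum_{k\neq Y_i}\|v_k\|^2\le q^2\sum_{k\neq Y_i}\|v_k\|^2 .
\]
Adding the diagonal terms $q(1-q)\|v_k\|^2$ gives $\E\big\|\sum_{k\neq Y_i}(\bone[k\in S]-q)v_k\big\|^2\le(\nl-1)qL^2\le dL^2$. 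This is exactly the input your regime calculations need.

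\textbf{Comparison with the paper.} The paper's main proof sidesteps this issue entirely. It includes each label independently, with probability $\frac12$ for $Y_i$ and $\frac{1}{e^\eps+1}$ otherwise, which is a ``Poissonized'' version of your scheme with $d\approx\nl/(e^\eps+1)$. All cross terms then vanish identically, and a single randomizer handles all three regimes without tuning $d$ or switching to randomized response.
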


In the low- and medium-privacy regimes, our bound quadratically improves the dependency on the label space size from $O(\nl)$ in prior work \citep{ghazi2021deeplearninglabeldifferential} to $O(\sqrt{\nl})$.  Our algorithm is non-interactive, meaning that each user can provide their sanitized label immediately without needing to receive any additional information.

Our second contribution is to prove that this bound is, in fact, optimal. We establish a new information-theoretic lower bound for any algorithm satisfying $\eps$-L-LDP for the SCO problem, as stated below. In fact, our lower bound holds even for the more relaxed fully-interactive setting. (See \Cref{sec:prelim} for formal definitions of non-interactive and fully-interactive algorithms.)

\begin{theorem} \label{thm:main-lb}
Let $\eps > 0$ and $K, n \in \bbN$ be such that $n \geq \Theta\left(\frac{e^{\eps} K^2}{(e^{\eps} - 1)^2}\right)$. Then, for any fully-interactive $\eps$-local DP algorithm $\cA$, there exists a distribution $P$ on $\cX \times [K]$ such that,
\[
\EE_{\hw \gets \cA(P^{\otimes n})}\left[\cL(\hat{w}, P) \right] -   \cL(w^*, P) = \Omega\Paren{\sqrt{\max\Curly{\frac{\nl e^\eps}{(e^\eps - 1)^2}, 1}} \cdot \frac{RL}{\sqrt{\ns}}}. 
\]
\end{theorem}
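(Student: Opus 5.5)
Our plan is to reduce SCO under L-LDP to a private multi-class estimation problem with public features, and then lower bound that problem via a strong-data-processing / Assouad-type argument for $\eps$-LDP channels.

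\textbf{Step 1: the hard family.} We take $\cX = [m]$, uniform over $m = \Theta(\nl)$ public feature values (or $\cX$ a single point if $\nl$ is small). We use parameter space $\cW \subset \mathbb{R}^{\nl}$, or $\mathbb{R}^{m \times \nl}$ with the rows indexed by features. We use a linear loss $\ell(w,(x,y)) = -L\langle w_x, e_y\rangle$, suitably restricted so that $\cW$ has diameter $R$. For each $v \in \{\pm1\}^{m}$ we let $P_v$ have $X$ uniform, and, given $X = x$, we let $Y$ be uniform over a pair of labels $\{a_x, b_x\}$ with a tilt $\tfrac12(1 + v_x \gamma)$. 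Here the pairs $\{a_x,b_x\}$ are disjoint across $x$, so that $m \approx \nl/2$. The population loss then decomposes over $x$. Choosing the wrong sign of $v_x$ costs $\Omega(RL\gamma/\sqrt{m})$ on coordinate $x$, with the budget split so that $w$ lies in an $\ell_2$ ball of radius $R$ over $m$ blocks. Hence
\[
\text{excess risk} \gtrsim \frac{RL\gamma}{\sqrt m}\sum_{x} \Pr[\hat v_x \ne v_x].
\]

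\textbf{Step 2: the information bound.} By Assouad, it suffices to show that for neighbouring $v, v^{\oplus x}$ the total variation between the transcripts is bounded by a constant. For fully-interactive $\eps$-LDP protocols we invoke the standard chi-square/KL contraction bound of Duchi--Jordan--Wainwright style, in its sharpened form for all $\eps$. That form says the KL divergence between transcripts is at most $\sum_i O\big(\tfrac{(e^\eps-1)^2}{e^\eps}\big)\,\mathrm{TV}(P_v(\cdot\mid X_i), P_{v'}(\cdot\mid X_i))^2$, using that features are public and only labels are randomized. Only samples with $X_i = x$ (about $n/m$ of them) differ, each with TV $\approx \gamma$. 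Thus $\mathrm{KL} \lesssim \frac{n}{m}\cdot \frac{(e^\eps-1)^2}{e^\eps}\gamma^2$. We set $\gamma^2 \asymp \frac{m e^\eps}{n(e^\eps-1)^2}$, which makes each Assouad term constant. The excess risk is then $\Omega(RL\gamma\sqrt m) = \Omega\big(RL\sqrt{\tfrac{\nl e^\eps}{(e^\eps-1)^2 n}}\big)$. The constraint $\gamma \le 1$ is where the assumption $n \gtrsim \frac{e^\eps\nl^2}{(e^\eps-1)^2}$, or something weaker, enters; we may need $n/m$ large enough for concentration of the counts per feature as well. The $\Omega(RL/\sqrt n)$ term in the max is the classical non-private SCO lower bound, for instance with $\nl = 2$ and one feature.

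\textbf{Main obstacle.} The delicate step is the interactive KL bound with the correct $\frac{(e^\eps-1)^2}{e^\eps}$ factor in the medium-privacy regime. The naive $(e^\eps-1)^2$ bound is too lossy for large $\eps$, so we need a refined contraction that gives $e^{-\eps}$ scaling. We would first try to bound the per-round conditional chi-square. Write the randomizer's likelihoods as $q(s\mid y) \in [q_{\min}(s), e^\eps q_{\min}(s)]$. Then
\[
\sum_s \frac{\big(\sum_y (p(y)-p'(y))\,q(s\mid y)\big)^2}{\sum_y p(y)\,q(s\mid y)} \lesssim \frac{(e^\eps-1)^2}{e^\eps}\,\mathrm{TV}^2.
\]
This holds when the perturbation is supported on only two labels, and that is exactly why the construction uses two-label tilts per feature: it makes this estimate hold. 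Summing over adaptively chosen rounds via the chain rule handles interactivity.
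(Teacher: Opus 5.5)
There is a genuine gap: your hard family cannot yield the $\sqrt{\nl}$ factor. A bookkeeping slip in Step~1's per-coordinate cost, carried into the final line of Step~2, hides this. Each feature has probability $1/m$, and each block $w_x$ gets radius about $R/\sqrt m$. So a wrong sign on coordinate $x$ contributes only $\Theta(RL\gamma/m^{3/2})$, not $\Theta(RL\gamma/\sqrt m)$. Indeed, for the centred (or direction-restricted) version of your loss, $\cL(0,P_v)-\cL(w^*,P_v)=\Theta(RL\gamma/\sqrt m)$, so no estimator is ever forced to pay $\Omega(RL\gamma\sqrt m)$. Assouad therefore gives only $\Omega(RL\gamma/\sqrt m)$. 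With your choice $\gamma^2\asymp\frac{me^\eps}{n(e^\eps-1)^2}$ this is $\Omega\Paren{RL\sqrt{\frac{e^\eps}{(e^\eps-1)^2 n}}}$, with no dependence on $\nl$ at all.

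Retuning cannot repair this, because the family is genuinely easy. Features are public and, given $X=x$, the label lies in the known pair $\{a_x,b_x\}$. So user $i$ can apply binary randomized response to $\idc{Y_i=a_x}$, since randomizers may depend on $X_i$. Debiasing gives a gradient estimate that is unbiased under every $P_v$ and has second moment $O\Paren{L^2\Paren{\frac{e^\eps+1}{e^\eps-1}}^2}$. SGD then attains risk $O\Paren{\frac{e^\eps+1}{e^\eps-1}\cdot\frac{RL}{\sqrt n}}$ on every $P_v$. This is below the target by a factor $\sqrt{\nl}$ when $\eps\le 1$, and by $\sqrt{\nl/e^\eps}$ when $1\le\eps\le\ln\nl$. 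In information terms, each label has conditional mass about $1/2$, so the unprivatized per-sample $\chi^2$ is already $O(\gamma^2)$. Your $\frac{(e^\eps-1)^2}{e^\eps}$ contraction never binds for $\eps\gtrsim1$, and for small $\eps$ it only reproduces the binary $\eps^2$ cost.

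The fix goes in the opposite direction from your design choice. Your $\chi^2$ estimate needs only the \emph{perturbation} $p-p'$ to live on two labels, not the conditional law itself, since $\sum_y p(y)q(s\mid y)\ge q_{\min}(s)$ for every $p$. So use a single feature and $\theta=\frac{1}{\nl}\mathbf{1}+\gamma v$, with $v$ antisymmetric on $\nl/2$ disjoint label pairs and $\gamma\le\frac{1}{2\nl}$.
\begin{itemize}
\item Flipping one pair now has unprivatized $\chi^2$ of order $\nl\gamma^2$, but privatized $\chi^2$ only $O\Paren{\frac{(e^\eps-1)^2}{e^\eps}\gamma^2}$.
\item Assouad then gives $\gamma^2\asymp\frac{e^\eps}{(e^\eps-1)^2 n}$.
\item The linear loss on the unit ball has excess risk $\frac{\gamma\sqrt{\nl}}{2}(1-\langle w,b\rangle)\ge\frac{\gamma\sqrt{\nl}}{4}\|w-b\|_2^2$. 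This turns a constant fraction of sign errors into risk $\Omega(\gamma\sqrt{\nl})$.
\end{itemize}
The requirement $\gamma\le\frac{1}{2\nl}$ is exactly where the hypothesis $n\gtrsim\frac{e^\eps\nl^2}{(e^\eps-1)^2}$ comes from. That your family needed only a weaker condition was a warning sign. This is the family underlying the distribution-estimation bound of \citet{acharya2023unified}. The paper invokes that bound as a black box and transfers it via the loss $-\frac12\langle w,e_Y-\frac1\nl\mathbf{1}\rangle$, the inequality above, and Jensen.

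Separately, ``chain rule over rounds'' does not by itself handle \emph{fully} interactive protocols. When a user is queried repeatedly, the per-round conditional channels need not be $\eps$-LDP. They also act on the posterior of $Y_i$ given the history rather than on $P_v(\cdot\mid X_i)$. Appendix~A of that work supplies this extension, and you would need to cite or reprove it.
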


Our upper and lower bounds match up to constant factors across all privacy regimes ($\eps$) for any sufficiently large $\ns$.
Thus, our results completely characterize the minimax optimal excess risk for SCO under L-LDP.  This demonstrates that the true worst-case cost of L-LDP for this problem scales as $\Theta(\sqrt{\nl})$ in the low- and medium-privacy regimes. 

\section{Preliminaries}
\label{sec:prelim}

We recall formal definitions of local DP randomizers and non-interactive/fully-interactive local label DP below. These definitions follow their standard DP counterpart (e.g.,~\citep{Duchi02012018,JosephMN19}).
Note that, in all definitions below, we implicitly assume that $X_1, \dots, X_n$ are publicly known and thus the randomizers can depend arbitrarily on them.

\begin{definition}[Local Label DP (L-LDP) Randomizer] \label{def:ldp_randomizer}
A randomized algorithm $\cR: [\nl] \to \cS$ 
is \emph{$\eps$-L-LDP} if for any labels $y, y' \in [\nl]$ and any $S \in \cS$,  $\probof{\cR(y) = S} \le e^\eps \cdot \probof{\cR(y') = S}$.
\end{definition}

The non-interactive setting is where each user has to apply their local randomizer without seeing the output of any other randomizer. This is the most stringent setting of local DP. Our algorithm works in this model.

\begin{definition}[Non-Interactive L-LDP]
An algorithm $\cA$ satisfies \emph{non-interactive $\eps$-L-LDP} if it consists of a set $\{\cR_i\}_{i=1}^\ns$ of $\ns$ local randomizers and an analyzer $\cA_0$ such that for each data point $i \in [\ns]$ with private label $Y_i$:
\begin{enumerate}
    \item The local randomizer $\cR_i: [\nl] \to \cS$ is an $\eps$-L-LDP randomizer.
    \item A sanitized label $S_i = \cR_i(Y_i)$ is generated \emph{locally} by the $i$th user, independent of all other data points and outputs.
\end{enumerate}
The final output is computed by $\cA_0$ using only the public features and the sanitized labels:
\[
    \cA(\{(X_i, Y_i)\}_{i \in [\ns]}) = \cA_0(\{(X_i, S_i)\}_{i \in [\ns]}).
\]
\end{definition}

The fully-interactive setting is the most relaxed setting of local Label DP, where each user can now communicate with the analyzer in an arbitrary manner subject only to the constraint that the transcript satisfies Label DP.

\begin{definition}[Fully Interactive L-LDP]
An interactive protocol $\mathcal{A}$ between an aggregator and $n$ users is \emph{fully interactive} $\eps$-L-LDP if for any two datasets $D = \{(X_i, Y_i)\}_{i \in [n]}$ and $D' = \{(X_i, Y'_i)\}_{i \in [n]}$ that differ by a single label, and for any possible transcript $T$ (where a transcript consists of all aggregator queries and user responses), we have
\[
    \Pr[\text{Transcript}(\mathcal{A}, D) = T] \le e^\epsilon \cdot \Pr[\text{Transcript}(\mathcal{A}, D') = T].
\]
In this setting, the interactions can be arbitrary; the aggregator may adaptively choose which user to query next (potentially querying the same user multiple times) and what query to send based on the entire history of the interaction.
\end{definition}

\section{Label DP Algorithm for SCO}
\label{sec:algo}

Our algorithm is based on (projected) SGD where we randomize each label at the beginning and use an unbiased gradient estimator in each iteration; so far, these are the same ingredients as in~\citep{ghazi2021deeplearninglabeldifferential}. The main difference is that, instead of using $K$-ary randomized response, we use a randomizer that is inspired by sample-optimal randomization algorithms under local DP for histogram estimation~\citep{Erlingsson_2014,wang2016mutualinformationoptimallylocal,ye2017optimalschemesdiscretedistribution,feldman2021losslesscompressionefficientprivate}. However, we modify\footnote{It is in fact also possible to use the randomizers from \citep{wang2016mutualinformationoptimallylocal,ye2017optimalschemesdiscretedistribution} directly (and achieve asymptotically the same excess risk) but the computation is more complicated. We sketch the argument in \Cref{app:subset-selection}.} these randomizers (and estimators) slightly for ease of variance computation, as specified below.

\begin{proof}[Proof of \Cref{thm:upper_bound}]
We start by defining the $\eps$-L-LDP randomizer $\cR$ that maps the label space $[\nl]$ to the output space $\cS$, where $\cS = 2^{[\nl]}$ is the power set of the label space.
For each data point $t \in [\ns]$ with private label $Y_t$, the \emph{local randomizer} $\cR_t: [\nl] \to \cS$ generates a \emph{sanitized} label $S_t = \cR_t(Y_t)$ independently.
More specifically, each $i \in [\nl]$ is included in $S_t$ independently with probability
\begin{equation} \label{eqn:randomization}
     \probof{i \in S_t} = \begin{cases}
    \frac{1}{2}, &\text{ if } i = Y_t, \\
    \frac{1}{e^\eps + 1}, &\text{ if } i \neq Y_t.
    \end{cases}
\end{equation}
It can be verified that for any $y \neq y'$ and set $S \in \cS$, we have
\begin{align*}
        \frac{\probof{\cR(y) = S}}{\probof{\cR(y') = S}} = \prod_{i \in S} \frac{\probof{i \in \cR(y)}}{\probof{i \in \cR(y')}} \cdot \prod_{i \notin S} \frac{\probof{i \notin \cR(y)}}{\probof{i \notin \cR(y')}}
        = \begin{cases}
        1 & \text{ if } y, y' \in S \text{ or } y, y' \notin S, \\
        e^\eps & \text{ if } y \in S, y' \notin S, \\
        e^{-\eps} & \text{ if } y \notin S, y' \in S.
        \end{cases} 
\end{align*}
Thus, the randomizer is $\eps$-L-LDP as desired.

Next we describe the analyzer $\cA_0$ for SCO under label DP.
\begin{algorithm}[h]
\caption{Label DP SCO with Privatized Label Subset.}

\renewcommand{\algorithmicendfor}{}
\label{alg:max_element}
\begin{algorithmic}[1]

\makeatletter
\renewcommand{\ENDFOR}{\end{ALC@for}}
\makeatother

\REQUIRE Dataset $\{(X_1, Y_1), \ldots, (X_\ns, Y_\ns) \}$
\item[\textbf{Parameters:}] Label DP parameter $\eps$;
diameter $R$; Lipschitz parameter $L$ 
\STATE \COMMENT{Local randomization stage for each user $t$}
\FOR{$t = 1, \ldots, \ns$}
\STATE $S_t \gets \cR_t(Y_t)$ \hfill \COMMENT{Locally compute using (\ref{eqn:randomization})}
\STATE Send $(X_t, S_t)$ to $\cA_0$ 
\hfill \COMMENT{Send sanitized label to  analyzer}
\hfill 
\ENDFOR
\STATE \COMMENT{Learning stage}
\STATE Select an arbitrary initial value $w_0 \in \cW$
\STATE $\eta 
\gets \frac{R}{L}\sqrt{\frac{(e^\eps - 1)^2}{2\ns(\nl+e^\eps)e^\eps}}$
\hfill\COMMENT{Learning rate}
\FOR{$t = 1, \ldots, \ns$}
    \STATE For $k \in [\nl]$, compute $\nabla \ell(w_{t-1}, (X_t, k))$
    \STATE 
    $\displaystyle{
        \hat{g_t} \gets \frac{2(e^\eps + 1)}{e^\eps - 1} \Paren{\sum_{k \in [\nl]} \Paren{\idc{k \in S_t} - \frac{1}{e^\eps + 1}} \cdot \nabla \ell(w_{t-1}, (X_t, k))}
    }$ \label{line:gradient-est}
    \STATE 
    $
        w_t \gets \text{Proj}_{\cW}\Paren{w_{t-1} - \eta \hat{g_t}}
    $
    \hfill\COMMENT{Project and update}
\ENDFOR
\RETURN $\bar{w} \gets \frac{1}{\ns} \sum_{t = 1}^\ns w_t$
\end{algorithmic}
\end{algorithm}

Note that the overall algorithm $\cA$ satisfies $\eps$-label DP because it consists of $\eps$-L-LDP local randomizers $\cR_t$ generating sanitized labels $S_t$. 
We next prove that $\hat{g_t}$ is an unbiased estimate of $\nabla \ell(w_{t-1}, (X_t, Y_t))$, and hence is also an unbiased estimate of $\nabla \cL(w_{t-1}, P)$ since  $ (X_t, Y_t) \sim P$.
This holds since
\begin{align*}
    \EE[ \hat{g_t}]  = & \frac{2(e^\eps + 1)}{e^\eps - 1} \Paren{\sum_{k \in [\nl]} \Paren{\probof{k \in S_t} - \frac{1}{e^\eps + 1}} \cdot \nabla \ell(w_{t-1}, (X_t, k))} \\
    = &  \frac{2(e^\eps + 1)}{e^\eps - 1} \Paren{\sum_{k \in [\nl]} \Paren{\frac{1}{e^\eps + 1} \idc{k \neq Y_t} + \frac{\idc{k = Y_t} }{2}- \frac{1}{e^\eps + 1}} \cdot \nabla \ell(w_{t-1}, (X_t, k))} \\
    = & \nabla \ell(w_{t-1}, (X_t, Y_t)).
\end{align*}

Moreover, since the events $\idc{k \in S_t}$ are independent, we have
\begin{align*}
\EE\left[\|\hat{g_t} \|^2\right] = & \EE\left[\|\nabla \ell(w_{t-1}, (X_t, Y_t)) \|^2\right] +
    \EE[\|\hat{g_t} - \nabla \ell(w_{t-1}, (X_t, Y_t))\|^2] \\
    \le &  L^2 + \Paren{\frac{2(e^\eps + 1)}{e^\eps - 1} }^2 \sum_{k \neq Y_t} \Paren{\|\nabla \ell(w_{t-1}, (X_t, k))\|^2 \cdot \EE\left[\Paren{\idc{k \notin S_t} - \frac{1}{e^\eps + 1}}^2\right]}  \\
    & \quad \quad\quad\quad\quad\quad + \Paren{\frac{2(e^\eps + 1)}{e^\eps - 1} }^2 \|\nabla \ell(w_{t-1}, (X_t, Y_t))\|^2 \cdot \EE\left[\Paren{\idc{Y_t \in S_t} - \frac{1}{2}}^2\right] \\
    = & L^2\Paren{1 + \Paren{\frac{2(e^\eps + 1)}{e^\eps - 1} }^2\Paren{(K-1)\cdot\frac{e^{\eps}}{(e^{\eps}+1)^2} + \frac{1}{4}}} \\
    \le &  L^2\Paren{1 + \frac{4e^\eps (\nl + e^\eps)}{\Paren{e^\eps - 1}^2}}.
\end{align*}

Hence by standard analysis of projected stochastic gradient descent for convex functions (e.g., \cite[Section 6.1]{bubeck2015convexoptimizationalgorithmscomplexity}), we have
\[
    \cL(\bar{w}, P) -  \cL(\bar{w}, w^*) \le \frac{R^2}{\eta\ns} + \frac{\eta}{2\ns} \sum_{t = 1}^\ns \EE\left[\|\hat{g_t} \|^2\right] \le  \frac{R}{\eta\ns} + \frac{\eta L^2}{2}\Paren{1 + \frac{4e^\eps (\nl + e^\eps)}{\Paren{e^\eps - 1}^2}}.
\]
Plugging in the choice of $\eta$ from \Cref{alg:max_element}, we get the desired bound.
\end{proof}

\section{Lower Bound}

For $\theta \in [0, 1]^K$ such that $\sum_{i \in [K]} \theta_i = 1$, we let $\cD_{\theta}$ be the distribution on $[K]$ with probability mass $\theta_i$ at each $i \in [K]$.  For simplicity, for a set $A$, we use $\theta \in A$ to denote that for each $i \in [K], \theta_i \in A$.  We will reduce from the discrete distribution estimation problem with $\ell_2$-error.  In the \emph{discrete distribution estimation} problem, there is an unknown underlying distribution $\cD_{\theta}$ on $[K]$. Each user $i$ receives $Y_i \in [K]$ and the goal is to estimate $\theta$, where the utility is measured in terms of the expected $\ell_2$-error. We use the following lower bound from \cite[Corollary 4]{acharya2023unified}\footnote{While the main body of \citep{acharya2023unified} only states this lower bound for the more restricted \emph{sequentially-interactive} local DP, their Appendix A shows how to extend this to the fully-interactive setting.}.
\begin{theorem}[\citep{acharya2023unified}] \label{thm:discrete-dist-lb}
Let $\eps > 0$ and $K, n \in \bbN$ be such that $n \geq \Theta\left(\frac{e^{\eps} K^2}{(e^{\eps} - 1)^2}\right)$. Then, for any fully-interactive $\eps$-local DP algorithm $\cA'$ for discrete distribution estimation, there exists a distribution $\cD_{\theta}$ such that
\begin{align*}
\E_{\htheta \gets \cA'(\cD_{\theta}^{\otimes n})}[\|\htheta - \theta\|_2] \geq \Omega\left(\sqrt{\frac{e^{\eps} K}{(e^{\eps} - 1)^2}} \cdot \frac{1}{\sqrt{n}}\right).
\end{align*}
Moreover, if $K$ is even, this holds even when $\theta \in \Curly{\frac{1}{K} + \gamma, \frac{1}{K} - \gamma}$ for some $\gamma \in \Sparen{0, \frac{1}{2K}}$ such that $\gamma = \Theta\Paren{\sqrt{\frac{e^{\eps}}{(e^{\eps} - 1)^2}} \cdot \frac{1}{\sqrt{n}}}$ and $\gamma$ is known to the algorithm $\cA'$.
\end{theorem}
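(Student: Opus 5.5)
The plan is an Assouad-type argument on a Paninski-style perturbed-uniform family, combined with an information-contraction bound for $\eps$-LDP channels in the spirit of \citet{acharya2023unified}. Throughout I assume $K$ is even; odd $K$ is handled by padding one label with zero mass.

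\textbf{Step 1 (hard family and reduction to Assouad).} Index the family by $z \in \{\pm 1\}^{K/2}$ and set
\[
\theta^z_{2j-1} = \tfrac{1}{K} + \gamma z_j, \qquad \theta^z_{2j} = \tfrac{1}{K} - \gamma z_j .
\]
For every $z$ this is a valid distribution with $\theta^z \in \{\frac1K+\gamma,\frac1K-\gamma\}$, provided $\gamma\le \frac{1}{2K}$. Given any estimator $\htheta$, decode $\hat z_j = \mathrm{sign}(\htheta_{2j-1}-\htheta_{2j})$. Each wrong sign forces $\ell_1$-error at least $2\gamma$ on that pair, up to constants. Draw $z$ uniformly and let $P_{+j}$ and $P_{-j}$ be the transcript distributions conditioned on $z_j=\pm1$. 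Assouad's lemma then gives
\[
\E\|\htheta-\theta\|_1 \gtrsim \gamma\sum_{j\le K/2}\bigl(1-d_{TV}(P_{+j},P_{-j})\bigr).
\]
Since $\|\htheta-\theta\|_2 \ge \|\htheta-\theta\|_1/\sqrt K$, it suffices to show $\frac{2}{K}\sum_j d_{TV}(P_{+j},P_{-j}) \le \frac12$. That yields $\E\|\htheta-\theta\|_2 \gtrsim \sqrt K\,\gamma$.

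\textbf{Step 2 (information contraction for $\eps$-LDP channels).} By Cauchy--Schwarz and Pinsker-type bounds, it suffices to control $\sum_j d_{TV}(P_{+j},P_{-j})^2$. Using the chain rule over users, as in the appendix of \citet{acharya2023unified}, this sum is bounded by $\sum_{t=1}^n I_t$. Here $I_t$ is the single-user quantity
\[
I_t=\sum_s \sum_j \frac{\langle \delta_j, W(s\mid\cdot)\rangle^2}{m_s},
\]
where $W(s\mid y)$ is the (history-dependent) channel of user $t$, $m_s=\frac1K\sum_y W(s\mid y)$, and $\delta_j = \gamma(e_{2j-1}-e_{2j})$. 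The $\delta_j$ are orthogonal and orthogonal to $\mathbf 1$. Hence
\[
\sum_j\langle\delta_j,W(s\mid\cdot)\rangle^2 \le 2\gamma^2\,\|W(s\mid\cdot)-m_s\mathbf 1\|_2^2 .
\]
The $\eps$-LDP constraint gives $W(s\mid y)\in[c_s,e^\eps c_s]$ for all $y$. This yields two bounds on $\|W(s\mid\cdot)-m_s\mathbf1\|_2^2/m_s$:
\begin{itemize}
\item $\le K (e^\eps-1)^2 c_s^2/m_s \le K(e^\eps-1)^2 c_s$, from bounding the range of $W(s\mid\cdot)$;
\item $\le e^\eps c_s K$, from bounding $(W-m_s)^2\le W\cdot e^\eps c_s$ and summing.
\end{itemize}
Finally, $\sum_s c_s\le \sum_s W(s\mid y)=1$. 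Together these give $I_t \lesssim \gamma^2 K \min\{(e^\eps-1)^2,e^\eps\}$, which is $\asymp \gamma^2 K (e^\eps-1)^2/e^\eps$ in both regimes.

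\textbf{Step 3 (choice of $\gamma$).} The previous step gives $\frac{2}{K}\sum_j d_{TV}^2 \lesssim n\gamma^2 (e^\eps-1)^2/e^\eps$. Choosing $\gamma = c\sqrt{e^\eps/((e^\eps-1)^2 n)}$ with a small constant $c$ makes this at most $1/4$. The requirement $\gamma\le\frac1{2K}$ is exactly the hypothesis $n\gtrsim e^\eps K^2/(e^\eps-1)^2$. With this $\gamma$, Step 1 yields the claimed $\ell_2$ bound $\Omega(\sqrt K\gamma)$. Moreover, $\gamma$ is a fixed known quantity and all hard instances lie in $\{\frac1K\pm\gamma\}^K$, which gives the ``moreover'' part.

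\textbf{Main obstacle.} The delicate step is the chain-rule decomposition in Step 2 for fully-interactive protocols. There, a user may be queried multiple times, so the per-user channel depends on the history and is not a single fixed $W$. I would follow \citet{acharya2023unified}, Appendix A. Their approach conditions on the transcript prefix and groups all responses of a given user into one $\eps$-LDP channel of that user's label, since the total transcript is $\eps$-DP in that label. The single-user bound then applies unchanged to this grouped channel, and the grouped bounds are summed over users.
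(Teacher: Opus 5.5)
The paper gives no proof of this statement: it imports Corollary~4 and Appendix~A of \citet{acharya2023unified}. Your argument reconstructs that result in the same Assouad-plus-$\chi^2$-contraction framework the source uses, so it follows the same route, and the sequentially-interactive core is correct. That covers the sign decoding and Assouad step, and the reduction to $\frac{2}{K}\sum_j d_{TV}(P_{+j},P_{-j})^2\le\frac14$. It also covers the orthogonality step, the single-channel bound $2\gamma^2K\min\{(e^\eps-1)^2,e^\eps\}\asymp\gamma^2K(e^\eps-1)^2/e^\eps$, and the choice of $\gamma$; the requirement $\gamma\le\frac{1}{2K}$ is exactly the sample-size hypothesis. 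The ``moreover'' clause also goes through. Two places need repair.

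\textbf{Pointwise inequality.} The claim $(W(s\mid y)-m_s)^2\le e^\eps c_s\,W(s\mid y)$ is false for large $\eps$. Take one entry equal to $c_s$ and all others equal to $e^\eps c_s$. The summed bound you actually need still holds, since $\sum_y (W(s\mid y)-m_s)^2\le\sum_y W(s\mid y)^2\le e^\eps c_s\cdot K m_s$.

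\textbf{Fully-interactive step.} Prefix conditioning does not work here. Once a prefix is fixed, users who already spoke no longer have labels distributed as $\cD_{\theta^z}$, and different users' messages are interleaved, so there is no clean per-user chain rule along prefixes.

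Use the rectangle property instead. Condition on public coins and absorb the analyzer's randomness into one factor. Then $\Pr[T=\tau\mid y_1,\dots,y_n]=\prod_t q_t(\tau,y_t)$, and the L-LDP definition gives $q_t(\tau,y)\le e^\eps q_t(\tau,y')$. However, $q_t(\tau,\cdot)$ is not a stochastic kernel in $\tau$, so your step $\sum_s c_s\le 1$ cannot be applied to it directly.

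The fix is to pass to an averaged channel:
\begin{itemize}
\item Set $\bar q^z_s(\tau)=\E_{Y\sim\cD_{\theta^z}}[q_s(\tau,Y)]$. The transcript law under $\theta^z$ is then $M_z=\prod_t\bar q^z_t$.
\item Define $\tilde W^z_t(\tau\mid y)=q_t(\tau,y)\prod_{s\neq t}\bar q^z_s(\tau)$. This is the law of the whole transcript when user $t$ holds $y$ and every other user samples from $\cD_{\theta^z}$. It is a genuine $\eps$-LDP channel.
\item Write $\theta\tilde W$ for its output law on input $Y\sim\cD_\theta$, and $z^{\oplus j}$ for $z$ with coordinate $j$ flipped. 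The factors $s\neq t$ cancel, giving the exact identity $\mathrm{KL}(M_z\,\|\,M_{z^{\oplus j}})=\sum_t\mathrm{KL}(\theta^z\tilde W^z_t\,\|\,\theta^{z^{\oplus j}}\tilde W^z_t)$.
\item Because the channel $\tilde W^z_t$ does not depend on $j$, you can move $\sum_j$ inside.
\end{itemize}
Combine this with $d_{TV}(P_{+j},P_{-j})\le\E_Z[d_{TV}(M_Z,M_{Z^{\oplus j}})]$, Pinsker, and $\mathrm{KL}\le\chi^2$. This yields $\sum_j d_{TV}(P_{+j},P_{-j})^2\lesssim\sum_t\max_z I(\tilde W^z_t)$, where $I(\tilde W^z_t)$ is your $I_t$ evaluated at this channel. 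Your single-user bound then applies verbatim and closes the argument for fully-interactive protocols.
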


Our proof of \Cref{thm:main-lb} proceeds by setting the loss as a linear loss that ignores the feature vector and attempts to learn the parameter $\theta$ of the label distribution $\cD_{\theta}$. We remark that similar ideas have been used in the literature (e.g.,  \citep{BassilyST14}), but we include the  proof for completeness.

\begin{proof}[Proof of \Cref{thm:main-lb}]
The lower bound of $\Omega(RL/\sqrt{\ns})$ comes from standard lower bound for SCO without privacy constraints, 
hence we focus on the first term in the max. In the rest of the proof, we will be dealing with the class of linear loss functions and hence without loss of generality, we assume $R = L= 1$, and prove a lower bound of $\Omega\Paren{\sqrt{\frac{\nl e^\eps}{(e^\eps - 1)^2}} \cdot \frac{1}{\sqrt{\ns}}}.$
Without loss of generality, we also assume $K$ is an even number. We will now use an algorithm $\cA$ to construct an algorithm $\cA'$ for discrete distribution estimation. The algorithm $\cA'$ works as follows:
\begin{itemize}
\item We set the loss function to be independent of $X$, as $\ell(w; (X, Y)) = - \frac{1}{2} \left\langle w, e_{Y} - \frac{1}{K} \cdot \bone_K \right\rangle$ where $e_Y$ is the one-hot vector with a one at location $Y$ and $\bone_K$ is the  $K$-dimensional all-ones vector. It is simple to verify that this loss function is 1-Lipschitz.
\item Let $x_0 \in \cX$ be a fixed element.
\item Upon receiving $Y_1, \dots, Y_n$ as inputs, $\cA'$ runs $\cA$ on $(x_0, Y_1), \dots, (x_0, Y_n)$ to get an output $\hw$.
\item Output $\htheta = \frac{1}{K} \cdot \bone_K + \alpha \cdot \hw$,  where $\alpha = \gamma\sqrt{K}$ and $\gamma = \Theta\Paren{\sqrt{\frac{e^{\eps}}{(e^{\eps} - 1)^2}} \cdot \frac{1}{\sqrt{n}}}$ is from \Cref{thm:discrete-dist-lb}.
\end{itemize}

Since $\cA'$ is simply a post-processing of $\cA$, it satisfies (fully-interactive) $\eps$-L-LDP. From \Cref{thm:discrete-dist-lb}, there exists $\theta \in \Curly{\frac{1}{K} + \gamma, \frac{1}{K} - \gamma}$ for some $\gamma = \Theta\Paren{\sqrt{\frac{e^{\eps}}{(e^{\eps} - 1)^2}} \cdot \frac{1}{\sqrt{n}}}$ such that
\begin{align} \label{eq:dist-est-lb}
\E_{\htheta \gets \cA'(\cD_{\theta}^{\otimes n})}[\|\htheta - \theta\|_2] \geq \Omega\left(\sqrt{\frac{e^{\eps} K}{(e^{\eps} - 1)^2}} \cdot \frac{1}{\sqrt{n}}\right). 
\end{align}

We will use this to analyze the error of $\cA$.
To do so, let $P_{\theta}$ denote the distribution of $(x_0, Y)$ where $Y \sim \cD_{\theta}$, and let $b = \frac{1}{\alpha} \cdot \left(\theta - \frac{1}{K} \cdot \bone_K\right)$. Note that $\|b\|_2 = 1$. First, observe that
\begin{align*}
\cL(w, P_{\theta}) 
= - \frac{1}{2} \EE_{Y \sim \cD_{\theta}}\left[ \left\langle w, e_{Y} - \frac{1}{K} \mathbf{1}_K \right\rangle \right]
      = -\frac{1}{2} \left\langle w, \theta - \frac{1}{K} \mathbf{1}_K \right\rangle
      = -\frac{\alpha}{2} \langle w, b \rangle.
\end{align*}
Thus, we have $w^* = b$. Furthermore, for all $w$ with $\|w\| \le 1$,
\begin{align*}
    \cL(w, P_{\theta}) - \cL(w^*, P_{\theta}) = \frac{\alpha}{2} -\frac{\alpha}{2} \langle w, b \rangle
    \geq \frac{\alpha}{4} \Paren{\|b\|_2^2 + \|w\|_2^2 - 2\langle w, b\rangle}
    = \frac{\alpha}{4} \|w - b\|_2^2.
\end{align*}
Moreover, by setting $\htheta = \frac{1}{K} \cdot \bone_K + \alpha \cdot \hw$, we have $\|\htheta - \theta\|_2 = \alpha \|\hw - b\|_2.$
Combining these, we have
\begin{align*}
\EE_{\hw \gets \cA(P_{\theta}^{\otimes n})}\left[\cL(\hat{w}, P_{\theta}) \right] -   \cL(w^*, P_{\theta}) &\geq \frac{1}{4\alpha} \E_{\htheta \gets \cA'(\cD_{\theta}^{\otimes n})}[\|\htheta - \theta\|_2^2] \\
\text{(Jensen's Inequality)} &\geq \frac{1}{4\alpha} \left(\E_{\htheta \gets \cA'(\cD_{\theta}^{\otimes n})}[\|\htheta - \theta\|_2]\right)^2 \\
&\overset{\eqref{eq:dist-est-lb}}{\geq} \Omega\left(\frac{1}{\sqrt{\frac{e^{\eps}}{(e^{\eps} - 1)^2}} \cdot \frac{1}{\sqrt{n}} \cdot \sqrt{K}}\right) \cdot \Omega\Paren{\sqrt{\frac{e^{\eps} K}{(e^{\eps} - 1)^2}} \cdot \frac{1}{\sqrt{n}}}^2 \\
&= \Omega\Paren{\sqrt{\frac{e^{\eps} K}{(e^{\eps} - 1)^2}} \cdot \frac{1}{\sqrt{n}}}. \qedhere
\end{align*}
\end{proof}

\section{Conclusion and Open Questions}

In this work, we give a new algorithm and a lower bound for SCO with local Label DP. Our lower bound is tight for \emph{all} regimes of $\eps$, assuming that $\ns$ is sufficiently large. An obvious open question here is whether one can achieve tight bounds for \emph{all} $\ns$ as well. We remark that, even in discrete distribution estimation with $\ell_2$-error (which we reduce from), the known lower bounds~\citep{ye2017optimalschemesdiscretedistribution,acharya2023unified} are \emph{not} tight. Specifically, for $\Theta\left(\frac{e^{\eps} K^2}{(e^{\eps} - 1)^2}\right) \geq n \geq \Theta\left(\frac{1}{(e^{\eps} - 1)^2}\right)$, the lower bound from \citep{acharya2023unified} is $O\left(\sqrt[4]{\frac{e^\eps}{(e^\eps - 1)^2 \ns}}\right)$, which (to the best of our knowledge) is only tight up to polylogarithmic factors. In particular, there is a gap of $O(\sqrt[4]{\log K})$ between the best known upper bound from \citep{bassily2018linearqueriesestimationlocal} and the aforementioned lower bound. Furthermore, the algorithm from \citep{bassily2018linearqueriesestimationlocal} employs a projection technique which results in a bias, making it unsuitable for SGD algorithms in our SCO setting. (See also \citep{GhaziHK0MZ24}, which analyzes a projection-based Label DP algorithm in the \emph{central} model. Their analysis also yields a non-optimal dependency on $n$ due to the bias.)

\bibliographystyle{tmlr}
\bibliography{references}

\appendix

\section{Variant of the Algorithm via $d$-Subset Selection Mechanism}
\label{app:subset-selection}

In this section, we sketch an argument showing that we may replace the randomizer in \Cref{sec:algo} with the randomizer from~\citep{ye2017optimalschemesdiscretedistribution,wang2016mutualinformationoptimallylocal} and still achieve the same asymptotic excess risk guarantee, albeit via slightly more involved calculations. Throughout this section, we assume $\eps \leq \ln K$ for simplicity. (Otherwise, we may use the analysis for $\eps = \ln K$ instead.) 

Recall the {\it $d$-Subset Selection}~\citep{ye2017optimalschemesdiscretedistribution,wang2016mutualinformationoptimallylocal}, where $d \in [\nl - 1]$ is a parameter.
We use this to construct the $\eps$-L-LDP \emph{local randomizer} $\cR_t$ with output space $\cS_d = \binom{[\nl]}{d}$.
For a private label $Y_t$, $\cR_t(Y_t)$ outputs a subset $S_t \subseteq [\nl]$ of size $d$ where
\begin{align}\label{eqn:subset}
\Pr[S_t = S] =
\begin{cases}
\frac{e^\eps}{e^\eps \cdot \binom{\nl-1}{d-1} + \binom{\nl-1}{d}} &\text{ if } S \ni Y_t\\
\frac{1}{e^\eps \cdot \binom{\nl-1}{d-1} + \binom{\nl-1}{d}} &\text{otherwise.}
\end{cases}
& &\forall S \in \binom{[\nl]}{d}
\end{align}
Let $\gamma_d = \frac{1}{1 + e^{-\eps} \cdot \frac{\nl-d}{d}}$ and $\zeta_d = \frac{d - \gamma_d}{\nl - 1}$.
Notice that we have
\begin{align*}
\Pr[i \in S_t] =
\begin{cases}
\gamma_d &\text{ if } i = Y_t, \\
\zeta_d &\text{ otherwise.}
\end{cases}
\end{align*}
We can thus use a similar analyzer $\cA_0$ as in \Cref{alg:max_element} except that the estimator is now
\begin{align} \label{eq:est-ss}
\hat{g_t} = \frac{1}{\gamma_d - \zeta_d} \Paren{\sum_{k \in [\nl]} \Paren{\idc{k \in S_t} - \zeta_d} \cdot \nabla \ell(w_{t-1}, (X_t, k))}.
\end{align}
It is obvious to check that $\EE[ \hat{g_t}] = \nabla \ell(w_{t-1}, (X_t, Y_t))$. Thus, we are left to bound $\EE\left[\|\hat{g_t} \|^2\right]$.

Throughout, we assume that $d \leq 2\nl/3$;
this is the standard setting of parameters (and will be satisfied by our choice of $d$ below).
Notice that
\begin{align*}
\frac{1}{\gamma_d - \zeta_d} = \frac{(\nl - 1)(e^\eps + \frac{\nl-d}{d})}{(\nl - d)(e^{\eps} - 1)} \lesssim \frac{e^\eps + \nl/d}{e^\eps - 1}.
\end{align*}

Furthermore, from \Cref{eq:est-ss}, we have
\begin{align}
\left\|\left(\gamma_d - \zeta_d\right)\cdot\hat{g_t}\right\|^2 &\leq 2\|\Paren{\idc{Y_t \in S_t} - \zeta_d} \cdot \nabla \ell(w_{t-1}, (X_t, Y_t))\|^2 \nonumber \\&\qquad + 2\left\|\Paren{\sum_{k \in [\nl] \setminus Y_t} \Paren{\idc{k \in S_t} - \zeta_d} \cdot \nabla \ell(w_{t-1}, (X_t, k))}\right\|^2. \label{eq:expand-subset-selection}
\end{align}
For the first RHS term, observe that $\zeta_d \leq 1$ and thus, we simply have
\begin{align*}
\|\Paren{\idc{Y_t \in S_t} - \zeta_d} \cdot \nabla \ell(w_{t-1}, (X_t, Y_t))\|^2 \leq L^2.
\end{align*}
For the second RHS term, notice that, for all distinct $k, k' \in [\nl] \setminus Y_t$, we have
\begin{align*}
\EE[\idc{k \in S_t} - \zeta_d] &= 0 \\
\EE[\Paren{\idc{k \in S_t} - \zeta_d}^2] &= \zeta_d(1 - \zeta_d) \\
\EE\left[\left<\idc{k \in S_t} - \zeta_d, \idc{k' \in S_t} - \zeta_d\right>\right] &= \Pr\left[\{k, k'\} \subseteq S_t\right] -\zeta_d^2 =: \theta
\end{align*}
Observe that $-\zeta_d^2 \leq \theta \leq 0$.
We can bound the expectation of the second RHS term in \eqref{eq:expand-subset-selection} as follows:
\begin{align*}
&\EE\left[\left\|\Paren{\sum_{k \in [\nl] \setminus Y_t} \Paren{\idc{k \in S_t} - \zeta_d} \cdot \nabla \ell(w_{t-1}, (X_t, k))}\right\|^2\right] \\
&= \Paren{\sum_{k \in [\nl] \setminus Y_t} \zeta_d(1 - \zeta_d) \cdot \left\|\nabla \ell(w_{t-1}, (X_t, k))\right\|^2} \\
&\qquad+ 2\Paren{\sum_{k \ne k' \in [\nl] \setminus Y_t} \theta \cdot \left<\nabla \ell(w_{t-1}, (X_t, k)), \nabla \ell(w_{t-1}, (X_t, k'))\right>} \\
&= \Paren{\sum_{k \in [\nl] \setminus Y_t} \Paren{\zeta_d(1 - \zeta_d) - \theta} \cdot \left\|\nabla \ell(w_{t-1}, (X_t, k))\right\|^2} + \theta \cdot \left\|\sum_{k \in [\nl] \setminus Y_t} \nabla \ell(w_{t-1}, (X_t, k))\right\|^2 \\
&\leq (\nl-1)\zeta_d L^2 \\
&\leq d L^2,
\end{align*}
where the first inequality is due to $-\zeta_d^2 \leq \theta \leq 0$,
and the last inequality is due to the definition of $\zeta_d$.
Combining the above inequalities, we get
\begin{align*}
\E[\|\hat{g_t}\|^2] \lesssim \Paren{\frac{e^\eps + \nl/d}{e^\eps - 1}}^2 \cdot d L^2.
\end{align*}
Now, set $d = \lceil \nl / (2e^\eps)\rceil$;
note that $d = \Theta(1 + \nl/e^\eps)$. Plugging this into the above, we get
\begin{align*}
\E[\|\hat{g_t}\|^2] \lesssim \Paren{\frac{e^\eps}{e^\eps - 1}}^2 \cdot \Theta(1 + \nl/e^\eps) L^2 \lesssim \frac{e^\eps(e^\eps + \nl)}{(e^\eps - 1)^2} L^2.
\end{align*}
This is the same asymptotic bound as in \Cref{sec:algo} so this variant yields the same excess risk bound.

\section{An Algorithm Using \citep{DuchiJW13}}
\label{app:djw}

We briefly discuss an $\eps$-L-LDP algorithm based on a vector randomization algorithm of \citep{DuchiJW13}, which is stated below. We note that the original paper \citep{DuchiJW13} only proves the following for $\cQ = \R^d$; however, it can be easily extend to any $d$-dimensional subspace $\cQ$ by writing the input vector $v$ using the basis of $\cQ$ and applying the original algorithm on the latter.

\begin{theorem}[\citep{DuchiJW13}]
Let $\cQ$ be a (publicly known) $d$-dimensional subspace of $\mathbb{R}^m$ and $L > 0$. 
There is an $\eps$-local DP randomizer such that, given a vector $v \in \cQ$ with $\|v\|_1 \leq L$, outputs a vector $\hv$ such that $\E[\|v - \hv\|^2] \leq O\left(\frac{L^2 d}{\min\{1, \eps^2\}}\right)$. 
\end{theorem}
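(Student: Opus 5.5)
Since $\cQ$ is a publicly known $d$-dimensional subspace of $\R^m$, I will fix a basis of $\cQ$ in advance, use it to identify $\cQ$ with $\R^d$, and run the original \citep{DuchiJW13} randomizer on the coordinates. Privacy then holds with no further argument: the basis is public and input-independent, so the randomizer's output distribution depends on $v$ only through the coordinate map, which is injective. All of the work is in getting the error bound, and the difficulty is that a change of basis alters norms.

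The key step is therefore to choose the basis so that both the $\ell_1$ constraint and the squared-$\ell_2$ error survive the transfer. First, I take an \emph{orthonormal} basis $u_1,\dots,u_d$ of $\cQ$ and write $U\in\R^{m\times d}$ for the matrix with these columns. For $v\in\cQ$ the coordinates are $c=U^\top v$, and $\|Uc\|_2=\|c\|_2$, so error measured in coordinates equals error measured in $\R^m$. I run the \citep{DuchiJW13} mechanism on $c$ to obtain $\hat c$, and output $\hv=U\hat c$. This gives $\E\|\hv-v\|_2^2=\E\|\hat c-c\|_2^2$.

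Second, I bound the input norm that the original mechanism requires. The \citep{DuchiJW13} $\ell_\infty$- and $\ell_2$-ball mechanisms in dimension $d$ give error $O\bigl(r^2 d/\min\{1,\eps^2\}\bigr)$ for inputs with $\|c\|_2\le r$; the $\ell_2$-ball version is the natural one to apply here. I have $\|c\|_2=\|v\|_2\le\|v\|_1\le L$. Taking $r=L$ then gives $\E\|\hat c-c\|_2^2=O\bigl(L^2 d/\min\{1,\eps^2\}\bigr)$, which is the claimed bound.

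The main obstacle is matching the exact form of the cited guarantee. If it is stated for $\ell_1$-bounded inputs in $\R^d$ instead of $\ell_2$-bounded ones, the orthonormal change of basis does not preserve $\ell_1$ norms, and a crude conversion would cost an extra factor of $d$. My plan for that case is to invoke the $\ell_2$-ball version of \citep{DuchiJW13}, which the $\ell_2$ reduction above makes available. Failing that, I would apply the $\ell_1$ mechanism directly in $\R^m$ and then project $\hv$ orthogonally onto $\cQ$; this projection is public post-processing, so it preserves privacy and can only shrink the error, since $v\in\cQ$. That route, however, yields a bound in terms of $m$ rather than $d$, so the $\ell_2$-ball version is the route that actually gives the stated $d$-dependence. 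Finally, I will check that the \citep{DuchiJW13} mechanism is unbiased, because the SGD application downstream needs $\hv$ to be an unbiased estimate of $v$, and this property is preserved by the linear map $U$.
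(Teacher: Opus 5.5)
Your proof is correct and follows the paper's own argument: express $v$ in a basis of $\cQ$ and run the \citep{DuchiJW13} mechanism on the coordinates in $\mathbb{R}^d$. You make it precise by choosing the basis orthonormal, which gives $\|U^\top v\|_2=\|v\|_2\le\|v\|_1\le L$ and $\E\|U\hat c-v\|_2^2=\E\|\hat c-c\|_2^2$; this is exactly what the paper's unspecified ``basis'' needs for the norm constraint and the error to transfer without loss (and your argument only uses $\|v\|_2\le L$, which is what the downstream gradient application actually provides). One minor slip: the $\ell_\infty$-ball mechanism of \citep{DuchiJW13}, which only sees $\|c\|_\infty\le r$, has squared error $O(r^2d^2/\min\{1,\eps^2\})$ rather than $O(r^2d/\min\{1,\eps^2\})$, so the $\ell_2$-ball mechanism you actually use is the necessary choice, not merely the natural one.
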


The algorithm for SCO is exactly the same as \Cref{alg:max_element} except that, on Line~\ref{line:gradient-est}, we obtain $\hat{g_t}$ via applying the above randomizer on $v = \nabla \ell(w_{t-1}, (X_t, Y_t))$ with $\cQ$ being the span of $\{\nabla \ell(w_{t-1}, (X_t, k))\}_{k \in [K]}$. Note that $\cQ$ here has dimension $d = K$. This gives the following result.

\begin{corollary}
There exists a local $\eps$-label DP algorithm $\hat{w}$ satisfying the following
\begin{align*}
    \EE_{\hw \gets \cA(P^{\otimes n})}\left[\cL(\hat{w}, P) \right] -   \cL(w^*, P) &= O\Paren{\frac{\sqrt{K}}{\min\{1, \eps\}} \cdot \frac{RL}{\sqrt{\ns}}}.
\end{align*}
\end{corollary}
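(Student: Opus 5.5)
The statement to prove is the final corollary: an $\eps$-L-LDP algorithm with excess risk $O\bigl(\frac{\sqrt{K}}{\min\{1,\eps\}}\cdot\frac{RL}{\sqrt{n}}\bigr)$. The plan is to reuse the projected SGD analysis from the proof of \Cref{thm:upper_bound}, replacing only the gradient estimator.

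\textbf{Privacy.} At step $t$, user $t$ applies the \citep{DuchiJW13} randomizer to $v=\nabla \ell(w_{t-1},(X_t,Y_t))$. The subspace $\cQ=\mathrm{span}\{\nabla\ell(w_{t-1},(X_t,k))\}_{k\in[K]}$ depends only on the public $X_t$ and on $w_{t-1}$. In turn, $w_{t-1}$ is a function of the earlier users' outputs only. So user $t$'s randomizer is an $\eps$-LDP map of $Y_t$ alone. Each user is queried once, so the transcript satisfies $\eps$-L-LDP by the usual sequential composition argument in the local model; this is sequentially interactive rather than non-interactive, which is all the corollary requires.

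\textbf{Unbiasedness.} I would take the \citep{DuchiJW13} randomizer to be unbiased, $\E[\hat v]=v$, which is how their construction is built. Then $\hat g_t$ is unbiased for $\nabla\cL(w_{t-1},P)$, by the same argument as in \Cref{thm:upper_bound}.

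\textbf{Second moment.} The randomizer's hypothesis is $\|v\|_1\le L$, whereas Lipschitzness only gives $\|v\|_2\le L$. I would resolve this by expressing $v$ in an orthonormal basis of $\cQ$. Its coordinate vector in $\R^{d}$ then has $\ell_2$ norm at most $L$, but $\ell_1$ norm only at most $\sqrt{d}\,L$.
\begin{itemize}
\item If the theorem is applied literally with $\ell_1$ norm $\sqrt d L$, the variance becomes $d^2L^2/\min\{1,\eps^2\}$, which is too large.
\item Instead I would invoke the $\ell_2$-ball version of \citep{DuchiJW13}. It gives error $O(L^2 d/\min\{1,\eps^2\})$ for inputs with $\|v\|_2\le L$, and the stated bound should be read in that sense.
\end{itemize}
With $d\le K$, this yields $\E\|\hat g_t\|^2\le L^2+O(L^2K/\min\{1,\eps^2\})$. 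This step, namely matching the norm assumption to the basis change, is where I expect the main care to be needed.

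\textbf{Rate.} The projected SGD bound $\frac{R^2}{\eta n}+\frac{\eta}{2}G^2$ is optimized by $\eta=\frac{R}{G\sqrt n}$, which gives excess risk $O(RG/\sqrt n)$. Substituting $G=O\bigl(L\sqrt K/\min\{1,\eps\}\bigr)$ gives the claimed rate.
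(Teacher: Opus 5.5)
Your proposal is correct and follows the paper's own argument: run the projected SGD of \Cref{alg:max_element}, but have user $t$ privatize $\nabla \ell(w_{t-1},(X_t,Y_t))$ with the \citep{DuchiJW13} randomizer inside the (at most $K$-dimensional) span of the candidate gradients, accept sequential rather than non-interactive communication, and plug the resulting second-moment bound into the standard SGD rate. Your extra care is exactly what makes that step rigorous: you use an orthonormal basis together with the $\ell_2$-ball version of the randomizer and its unbiasedness, because Lipschitzness gives $\|v\|_2\le L$ and not the $\|v\|_1\le L$ hypothesis in the paper's stated theorem.
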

As mention earlier, this matches the excess risk in our main theorem (\Cref{thm:upper_bound}) in the high-privacy regime, but our bounds are better in the other two regimes. Furthermore, our algorithm is non-interactive whereas this algorithm is (sequentially) interactive. 

\end{document}

%% file: glodef.tex
\usepackage{xspace}
\usepackage[protrusion=true,expansion=true]{microtype}
\usepackage{amsfonts,amsmath,amssymb,amsthm, pbox}

\usepackage[colorlinks,citecolor=blue,bookmarks=true]{hyperref}

\usepackage{multirow}

\usepackage{algorithmic,algorithm}

\usepackage[shortlabels]{enumitem}
\setitemize{noitemsep,topsep=0pt,parsep=0pt,partopsep=0pt}
\setenumerate{noitemsep,topsep=0pt,parsep=0pt,partopsep=0pt}

\makeatletter
\@ifundefined{theorem}{
  \theoremstyle{definition}
  \newtheorem{definition}{Definition}
  \theoremstyle{plain}
  \newtheorem{theorem}{Theorem}
  \newtheorem{corollary}{Corollary}

  \theoremstyle{remark}

}{}
\makeatother

\newcommand{\ignore}[1]{}

\newcommand{\EE}{\mathbb{E}}

\def \cA     {{\cal A}}

\def \cD     {{\cal D}}

\def \cL     {{\cal L}}

\def \cQ     {{\cal Q}}
\def \cR     {{\cal R}}
\def \cS     {{\cal S}}

\def \cW     {{\cal W}}
\def \cX     {{\cal X}}

\def \cZ     {{\cal Z}}

\def \Paren#1{{\left({#1}\right)}}

\def \Curly#1{{\left\{{#1}\right\}}}

\newcommand{\probof}[1]{\Pr\Paren{#1}}

\def\ignore#1{}

\newcommand{\bi}{\begin{itemize}}
\newcommand{\ei}{\end{itemize}}

\def\orpro{\mathop{\mathchoice
   {\vee\kern-.49em\raise.7ex\hbox{$\cdot$}\kern.4em}
   {\vee\kern-.45em\raise.63ex\hbox{$\cdot$}\kern.2em}
   {\vee\kern-.4em\raise.3ex\hbox{$\cdot$}\kern.1em}
   {\vee\kern-.35em\raise2.2ex\hbox{$\cdot$}\kern.1em}}\limits}

\def\andpro{\mathop{\mathchoice
 {\wedge\kern-.46em\lower.69ex\hbox{$\cdot$}\kern.3em}
 {\wedge\kern-.46em\lower.58ex\hbox{$\cdot$}\kern.25em}
 {\wedge\kern-.38em\lower.5ex\hbox{$\cdot$}\kern.1em}
 {\wedge\kern-.3em\lower.5ex\hbox{$\cdot$}\kern.1em}}\limits}

\def\simge{\mathrel{%
   \rlap{\raise 0.511ex \hbox{$>$}}{\lower 0.511ex \hbox{$\sim$}}}}

\def\simle{\mathrel{
   \rlap{\raise 0.511ex \hbox{$<$}}{\lower 0.511ex \hbox{$\sim$}}}}

\newcommand{\Sparen}[1]{\left[#1\right]}

%% file: locdef.tex
\newcommand{\R}{R}

\newcommand{\E}{\mathbb{E}}

\newcommand{\ns}{n}

\newcommand{\eps}{\varepsilon}

\newcommand{\htheta}{\hat{\theta}}
\newcommand{\hw}{\hat{w}}
\newcommand{\hv}{\hat{v}}
\newcommand{\bbN}{\mathbb{N}}